\newtheorem{theorem}{Theorem}[section]
\newtheorem{lemma}[theorem]{Lemma}
\newtheorem{assumption}[theorem]{Assumption}
\theoremstyle{remark}
\numberwithin{equation}{section}
\theoremstyle{remark}
\begin{document}

\title{\bf The Instrumental Variable Method for Estimating Local Average Treatment Regime Effects\thanks{We are grateful to Han Hong for valuable suggestions. We also thank Yiming He and Dean Eckles for helpful comments. All errors are ours.}}
\author{
  Thai T. Pham\footnote{Graduate School of Business, Stanford University. Email: \texttt{thaipham@stanford.edu}}
  \and
  Weixin Chen\footnote{Department of Economics, Stanford University. Email: \texttt{weixinc@stanford.edu}}
}
\date{\monthyeardate\today}
\maketitle

\sloppy % avoids the breakage of words at the end of lines

\bigskip
\begin{abstract}
We propose the instrumental variable regime (IVR) method to estimate the causal effects of multiple sequential treatments. This method serves to address the problem of endogenous selections of sequential treatments. An IVR is a sequence of instrumental variables in which each IV instruments for an endogenous treatment variable. Our proposed method generalizes the LATE model in \citet{IA1994} from a single treatment to many treatments applied sequentially. More precisely, with the IVR this model allows for estimating the local average treatment regime effects (LATRE), possibly conditional on a set of initial covariates. Though there exist studies in this area that use IVR, all of them require a structural functional form assumption. Our method is novel in that we do not require any such assumption. Thus unlike previous approaches, ours is robust to model misspecifications, which usually occur in treatment regime settings. The ideas and estimators in this paper are motivated and illustrated through a contextual example showing the use of IVR in estimating the treatment regime effect of advertisements on purchasing behaviors when advertisements are displayed in multiple periods. We demonstrate the performance of the proposed method with simulations. 
\end{abstract}

\noindent%
{\it Keywords:}  instrumental variable regime, treatment regime, endogenous selection, non-structural model, causal effect, advertisement effect.
\vfill

\section{Introduction}
\label{sec:intro}

Estimating treatment effects has been a major concern among researchers across different fields including economics, statistics, epidemiology, sociology, etc. Within this research area, estimating treatment regime effects (i.e., effects of a set of sequential treatments over time) has drawn a lot of attention from scientists and practitioners. In this paper, we contribute to this large literature by developing an instrumental variable regime (IVR) framework to estimate the effect of a treatment sequence on a pre-defined outcome when the treatment selections are endogenous. The IVR is a set of instrumental variables each of which instruments for an endogenous treatment variable. Technically, we generalize the idea of the LATE model in \citet{IA1994} from a single treatment to including many sequential ones. Our proposed model also allows for the estimation of a form of local average treatment effect of a treatment regime. More similarly to \citet{A2003}, we allow conditioning on a set of initial covariates of the treatment effect. 

The endogeneity problem usually occurs in observational and even in experimental studies: individuals usually make decisions about treatments based on the evaluation of future potential outcomes. It is important to resolve this issue because otherwise, estimations of causal effects will be affected by a part generated by the selection process rather than only the true treatment effects. This mis-estimation invalidates all causal inference decisions. 

There has been a lot of research on the use of IV in studying causal effects, some of which address the endogeneity problem in the sequential treatment (or longitudinal) setting (see, e.g., \cite{HL2004, O2012, SHBT2012, W2002}). However, to the best of our knowledge all of these precedent methods require a structural functional form assumption on the potential outcomes. In the treatment regime setting, this assumption can hardly be valid and thus, these methods can easily suffer from model misspecifications. In contrast, our proposed method does not make any such assumption and is therefore free from the model misspecification issue.  

On the other note, \cite{HN2007} attack the same question with some established theoretical results. However, their methodology is too complex and not so practical. Our proposed method, on the other hand, is reasonably simple to use in applied work. 

We present our ideas in the context of evaluating the effect of a sequence of advertisements (ads) on Facebook users' purchasing behaviors over time. Ideally, we are interested in the treatment regime effect of users reading ads on their purchasing decisions. However, whether an user reads the ads is unknown. Therefore, we answer a proxy question: what is the treatment regime effect of users clicking ads on their purchasing decisions?

More precisely, we focus on one product and one ad in each time period. At time $T = 0$ users observe their initial information $X_0$. Then they make decisions about whether to click the ad ($W_0 = 1$) or not ($W_0 = 0$). At time $T \geq 1$, users observe the outcome $Y_T$ and other information $X_T$. Then they make decisions about whether to click the ad or not ($W_1 \in \{0, 1\}$). We can think of $X_i$ as a set of personal information given on Facebook. The outcome $Y_T$ is the purchasing decision of the product at time $T$.

The endogenous treatment selection occurs as users are more likely to click the ad if they are more likely to purchase the product, thus biasing their clicking decisions. Hence in this setting the sequential randomization assumption made in traditional treatment regime literature (see e.g., \citet{ORR2010, HL2004, P2016, W2002}) fails to be true; the identification results are therefore no longer valid. 

Our proposed solution is to introduce a set of instrumental variables or an IVR $Z_i$'s where each $Z_i$ instruments for $W_i$. In this setting, $Z_i$ is the binary variable indicating whether the ad is displayed on user $i$'s Facebook homepage.

As mentioned earlier, there is a literature on how to use IV's to estimate treatment regime effects (see, e.g., \cite{HL2004, W2002}). However, all of them require a (usually linear) functional form assumption on the potential outcomes, and using IV's in the same way as we deal with linear regression models. Unlike them, we use IVs combined with propensity scores in a way that does not need to make such a functional form assumption; thus, our method is robust to (outcome) model misspecifications. 

With the introduction of the IVR $Z_i$'s, we can estimate the local average treatment regime effects (LATRE) for different compliance types, possibly conditional on a set of initial covariates. First, a complier refers to a user who clicks the ad if it is displayed while a non-complier does not click when the ad is displayed; there are many compliance types in a multi-period setting: a user can be a complier in some periods but a non-complier in others. Second, the set of initial covariates is useful. For example in our setting, we can condition on the gender variable; then, we can estimate the LATRE for female users and for male users separately. 

The rest of the paper is organized as follows. Section \ref{IVRmodel} describes the instrumental variable regime model. Section \ref{illusModel} discusses the identification results for the two-period model. Section \ref{mainIden} gives main identification results about the estimation of the local average treatment regime effect, perhaps conditional on a set of initial covariates. Section \ref{simu} focuses on the application of the model. Specifically, we run the proposed framework on a two-period setting using a simulated data set. Section \ref{concl} reviews related literature and concludes.

\section{The Instrumental Variable Regime Model}
\label{IVRmodel}

The Instrumental Variable Regime (IVR) model is built upon the Treatment Regime model (see e.g., \citet{CM2013, MVR2001, ORR2010, P2016}) and the LATE model (see \citet{IA1994, A2003, AIR1996}). This model addresses the endogeneity problem in the selections of treatments when the treatments are applied sequentially in multiple periods. The model is specified as follows.

There are $N$ users and $(T + 1)$ periods where $T \geq 1$. Each user $i$'s data is comprised of observations
\begin{equation*} 
	(X_0^i, W_0^i, Y_1^i, X_1^i, W_1^i, ..., Y_T^i, X_T^i, W_T^i, Y_{T + 1}^i, X_{T + 1}^i).
\end{equation*}
Here, $X_0^i$ is the initial set of covariates; for $j \in \{1, ..., T + 1\}$, $X_j^i$ is the set of covariates in period $j$ after receiving treatment $W_{j - 1}^i$ but before receiving treatment $W_j^i$. For $j \in \{0, 1, ..., T\}$, $W_j^i$ is the treatment whose value, that is, treatment level in period $j$ belongs to the set $\mathcal{W}_j^i$. In this setting, we assume $\mathcal{W}_j^i = \{0, 1\}$ for each $j$. After making a treatment decision in period $j \in \{0, 1, ..., T\}$, there is an observed outcome in period $(j + 1)$ denoted by $Y_{j + 1}^i$. We assume that $X_j^i$'s are exogenous and $Y_j^i$'s are functions of only user $i$'s past information. That is, we assume away the interference effects. These effects exist in many settings, but not in ours so long as the purchasing decision of each user is not publicly observed. In the latter parts of this paper, we will suppress the index $i$ when there is no ambiguity.

We proceed by defining
\begin{equation*} 
	W_j = 0 \text{ for } j < 0;\; X_j = 0 \text{ for } j < 0;\; Y_j = 0 \text{ for } j < 1.
\end{equation*}
Also, 
\begin{equation*}
	O_0 = X_0; \text{ and } O_j = (Y_j, X_j) \text{ for all } j \in \{1, 2, ..., T + 1\}.
\end{equation*}
We also use overbars with a subscript $j$ to denote the present variable at time $j$ and all its past values. For example, $\overline{O}_j = (O_0, O_1, ..., O_j)$. We use notations with no subscript to denote the whole history. For example, $O = \overline{O}_{T + 1}$. Moreover, as discussed above $Y_{j+1}$ is a function of $\overline{O}_j$ and $\overline{W}_j$; $W_j$ is a function of $\overline{O}_j$; we suppress the dependency on $\overline{O}_j$ in the latter discussion.

Among many, \citet{ORR2010, P2016} make the sequential randomization assumption that each $W_j$ is independent of all potential outcomes given past information $(\overline{O}_j, \overline{W}_{j - 1})$ and derives heterogeneous treatment regime effects. Moreover by taking the expectation, they could easily obtain average treatment regime effects. However in many settings including ours, this assumption is violated. To be more precise, users' treatment decisions are often driven by their likelihood to purchase the product.    

Such violation would make all the identification results (e.g., in \citet{ORR2010, P2016}) fail to be true. To address this issue, we propose the use of an instrumental variable regime in a similar sense to the instrumental variable in the LATE model. Specifically, assume that there are $(T + 1)$ instrumental variables $Z_j$'s for $j = 0, 1, ..., T$, one for each $W_j$. Let $\mathcal{Z}_j$ be the domain for $Z_j$ for each $j$. We assume further that each $\mathcal{Z}_j = \{0, 1\}$.

Now for each possible realization $z = (z_0, z_1, ..., z_T)$ of $Z = \overline{Z}_T$ and $w = (w_0, w_1, ..., w_T)$ of $W = \overline{W}_T$, we define the vectors of potential outcomes:
\begin{equation*}
	W(z) = (W_0(z_0), W_1(z_1), ..., W_T(z_T)) \; \text{ and }
\end{equation*}
\begin{equation*}
	O(z, w) = \left( X_0, Y_1(z_0, w_0), X_1(z_0, w_0), ..., Y_{T + 1}(\overline{z}_T, \overline{w}_T), X_{T + 1}(\overline{z}_T, \overline{w}_T) \right).
\end{equation*}  
We write
\begin{equation*}
	\mathcal{O} = \left\{\left.(W(z), O(z, w)) \; \right| \; z_j \in \mathcal{Z}_j \text{ and } w_j \in \mathcal{W}_j \text{ for } j = 0, 1, ..., T\right\}
\end{equation*}
to denote the set of all possible vectors of potential outcomes. More concretely, we make the following assumptions.

First, we implicitly assume that there is no interference between users: each user has his potential outcomes and his choice as a function of his past outcomes and instruments. Furthermore, a standard assumption is that the observed treatments and outcomes are consistent with the relevant potential treatments and outcomes.    

\begin{assumption}
\label{IVC}
\emph{(IVR Consistency)}
\begin{itemize}

	\item[(i)] For each $j \in \{0, 1, ..., T\}$, we have $W_j = W_j(Z_j)$.
	
	\item[(ii)] For each $j \in \{1, ..., T, T + 1\}$, we have $(Y_j, X_j) = \displaystyle \left(Y_j \left(\overline{Z}_{j - 1},  \overline{W}_{j - 1}\right), X_j \left(\overline{Z}_{j - 1},  \overline{W}_{j - 1}\right)\right)$. 

\end{itemize}
\end{assumption}

Next, we introduce a modified version of the sequential randomization assumption. We impose exclusion assumptions on $Z_j$'s to make them valid instruments. 

\begin{assumption}
\label{IVSR}
\emph{(IVR Sequential Randomization)}
\begin{itemize}
	\item[(i)] \emph{(Independence of IVR)} For each $j \in \{0, 1, ..., T \}$, $Z_j$ is conditionally independent of $\mathcal{O}$ given $\overline{O}_j$ and $\overline{Z}_{j - 1}$.
	
	\item[(ii)] \emph{(Exclusion of IVR)} 
	
\begin{itemize} 
    
    \item For each $j \in \{1, ..., T, T + 1\}$ and arbitrary $\overline{Z}_{j - 1}, \overline{Z^\prime}_{j - 1}$, and $\overline{W}_{j - 1}$, we have
\begin{equation*}
	\left(Y_j \left(\overline{Z}_{j - 1},  \overline{W}_{j - 1}\right), X_j \left(\overline{Z}_{j - 1},  \overline{W}_{j - 1}\right)\right) = \left(Y_j\left(\overline{Z^\prime}_{j - 1},  \overline{W}_{j - 1}\right), X_j\left(\overline{Z^\prime}_{j - 1},  \overline{W}_{j - 1}\right)\right).
\end{equation*}

    \item For each $j \in \{0, 1, ..., T\}$, 
    $$\{(Y_j(\overline{z}_{j - 1}, \overline{w}_{j - 1}), W_j(z_j)) | z_i, w_i \in \{0, 1\} \text{ for } i = 0, ..., j-1\} \perp Z_j.$$
\end{itemize}

\end{itemize}
\end{assumption}

Thanks to Assumption \ref{IVSR} (ii), we now can write $Y_j(\overline{W}_{j - 1})$ instead of $Y_j(\overline{Z}_{j - 1}, \overline{W}_{j - 1})$ and similarly for $X_j$. Also, $O(z, w) = O(w)$ for each $z$ and $w$. 

Before proceeding, we make another assumption that each $Z_j$ takes each value in its domain with positive probability. This assumption is essential as it necessitates the estimation of the causal effect of each treatment regime.

\begin{assumption}
\label{IVPO}
\emph{(IVR Positivity)} For each $j \in \{0, 1, ..., T\}$ and each realization of $\overline{O}_j$ and $\overline{Z}_j$, the following condition holds with probability $1$:
\begin{equation*}
	0 < P \left( Z_j = 1 | \overline{O}_j, \overline{Z}_{j - 1} \right) < 1.
\end{equation*}
\end{assumption}  

Recall that our main interest is to estimate the average causal effect of a treatment regime $(W_0, W_1, ..., W_T)$ on some measurable function $u(\cdot)$ of the outcomes $O$. Let $d$ denote a treatment regime that assigns each $W_j$ to a fixed value in $\mathcal{W}_j$. We write $W^d = (W^d_0, W^d_1, ..., W^d_T)$ to denote the treatment sequence if the subject had followed the regime $d$. Likewise, we write $O^d = O(W^d)$ to denote the vector of outcomes if the subject had followed the regime $d$. Then, the object of interest is $u(O^d)$. In many cases, $u(O^d) = Y_{T + 1}$ or $u(O^d) = \sum\limits_{t = 1}^{T+1} Y_t$. 

Now, we note that a realization $\overline{Z}_T$ of an IVR is a vector in $\{0, 1\}^{T + 1}$. For two realizations $\overline{Z}_T$ and $\overline{Z^\prime}_T$, we write
\begin{itemize}

    \item $\overline{W}_T(\overline{Z}_T) \succeq \overline{W}_T(\overline{Z^\prime}_T)$ if $W_j(Z_j) \geq W_j(Z^\prime_j)$ for all $j$ and there exists some $k$ such that $W_k(Z_k) > W_k(Z^\prime_k)$;
    
    \item $\overline{W}_T(\overline{Z}_T) \succ \overline{W}_T(\overline{Z^\prime}_T)$ if $W_j(Z_j) > W_j(Z^\prime_j)$ for all $j$.

\end{itemize}

Similarly to the Local Average Treatment Effect (LATE) model introduced by \citet{IA1994}, we are able to obtain identification results in this dynamic setting. 

With a single binary treatment variable $W$ and a corresponding binary instrument $Z$, traditional assumption is the monotonicity assumption, which states $P(W(1) \geq W(0) | X) = 1$. We make a similar assumption here. 

\begin{assumption}
\label{monotonicity} 
\emph{(Monotonicity)}
For each $j \in \{0, 1, ..., T\}$, we have 
\begin{equation*}
	P\left(W_j(1) \geq W_j(0)\right) = 1.
\end{equation*}
\end{assumption}

The last independence assumption we need is the independence between $W_j$'s in different periods conditional on common past information. 

\begin{assumption}
\label{indepacrosstime}
For any $j, k \in \{0, 1, ..., T\}$ with $j \ne k$, we have
$$\left. (W_j(0), W_j(1)) \perp (W_k(0), W_k(1))\; \right| \; \overline{O}_{\min(j, k)}, \overline{Z}_{\min(j, k) - 1}. $$
\end{assumption}

In the next section, we discuss the identification results for the two-period case. 

\section{Identification Results For Two-Period Model}
\label{illusModel}

We consider the case with $T = 1$. Denote $u(\cdot) = u(X_0,W_0,Y_1,X_1,W_1,Y_2)$. Denote $u_{ij}(\cdot)=u(X_0,i,,Y_1,X_1,j,Y_2)$. We are interested in the identification result of $\mathbb{E}\left[u(\cdot)|X_0,W_{j}(1)>W_{j}(0) \; \forall j = 0,1\right]$ and $\mathbb{E}\left[u_{ij}(\cdot)|X_0,W_{j}(1)>W_{j}(0) \; \forall j = 0,1\right]$. 
%%%%%%%%%%%%%%%%
%%%%%%%%%%%%%%%%
\subsection{\label{putthingstogether}Local Identification Results}
By DeMorgan's Law, we have
\begin{eqnarray}
    && P\big(W_{j}(1) > W_{j}(0) \; \forall j = 0, 1|X_0 \big) \times \mathbb{E}\left[u(\cdot)|X_0,W_{j}(1)>W_{j}(0) \; \forall j = 0,1\right] \nonumber\\
    &=& \mathbb{E}\left[u(\cdot)|X_0 \right] \nonumber \\
    && -\sum_{i=0,1} \mathbb{E} \left[u(\cdot)|X_0,W_0(1) = W_0(0)=i\right]P(W_0(1) = W_0(0)=i|X_0) \label{firstSum}\\
    && - \sum_{i=0,1} \mathbb{E}\big[u(\cdot)|X_0,W_0(1)>W_0(0),W_1(1) = W_1(0)=i \big] \nonumber \\
    && \hskip+2.5cm \times P(W_0(1)>W_0(0)|X_0)P \big(W_1(1) = W_1(0)=i|X_0\big). \label{secondSum}
\end{eqnarray}

Thus, in order to identify  $\mathbb{E}\left[u(\cdot)|X_0,W_{j}(1)>W_{j}(0) \; \forall j = 0,1\right]$ we need to identify three terms: Sum (\ref{firstSum}), Sum (\ref{secondSum}), and $P\big(W_{j}(1) > W_{j}(0) \; \forall j = 0, 1|X_0 \big)$; see appendix for derivation. Combining these identification results, we obtain
\begin{equation} \label{iden2period}             \mathbb{E}\left[u(\cdot)|X_0,W_{j}(1)>W_{j}(0) \right] = \frac{\mathbb{E}[\kappa u(\cdot)|X_0]}{P\big(W_{j}(1) > W_{j}(0) \; \forall j = 0, 1|X_0 \big)},
\end{equation}
where 
\begin{eqnarray*}
\kappa = &&1  - \frac{ W_0 (1-Z_0)}{P(Z_0=0|X_0,X_1)} - \frac{(1-W_0) Z_0}{P(Z_0=1|X_0,X_1)} - \frac{ W_1 (1-Z_1)}{P(Z_1=0|X_0,X_1)} - \frac{(1-W_1) Z_1}{P(Z_1=1|X_0,X_1)} \\
&& + \frac{W_0(1-Z_0)W_1(1-Z_1)}{P(Z_1=0,Z_0=0|X_0,X_1)}+ \frac{(1-W_0)Z_0W_1(1-Z_1)}{P(Z_1=0,Z_0=1|X_0,X_1)}\\
&& +\frac{W_0(1-Z_0)(1-W_1)Z_1}{P(Z_1=1,Z_0=0|X_0,X_1)}+ \frac{(1-W_0)Z_0(1-W_1)Z_1}{P(Z_1=1,Z_0=1|X_0,X_1)}
\end{eqnarray*}

%%%%%%%%%%%%%%%%%%%%%
%%%%%%%%%%%%%%%%%%%%%
\subsection{Local Average Treatment Regime Effect}
We can then express local average treatment regime effect in terms of Equation (\ref{iden2period}). For example, the local average effect of a regime switching $W_0$ from 0 to 1 and $W_1$ from 1 to 0, for a complier, for whom $W_j=Z_j$, is 
\begin{eqnarray}
    && \mathbb{E} \Big[u_{10}(\cdot) - u_{01}(\cdot)|X_0,W_j(1)>W_j(0) \; \forall j = 0, 1\Big]\nonumber\\
    &=& \frac{1}{P(W_j(1)>W_j(0) \; \forall j = 0, 1|X_0)} \Bigg[\mathbb{E} \Big[\kappa u(\cdot)\frac{W_0(1-W_1)}{P(Z_0=1,Z_1=0|X_0,X_1)}|X_0\Big] \nonumber \\
    && \hskip+6cm -\mathbb{E} \Big[\kappa u(\cdot)\frac{(1-W_0)W_1}{P(Z_0=0,Z_1=1|X_0,X_1)}|X_0\Big]\Bigg].
\end{eqnarray}
%%%%%%%%%%%%%%%%%%%%
%%%%%%%%%%%%%%%%%%%%
\subsection{Unconditional Local Identification}
As $X_0$ might typically has high dimensions, it is convenient to illustrate the local average treatment regime effect unconditionally. Now let's derive the expression for unconditional effect which we will use in simulation part. Applying Bayes' theorem and integrating yields
\begin{eqnarray}
&&\int \mathbb{E}\Big[u(\cdot|X_0,W_{j}(1)>W_{j}(0) )\Big]\text{d}P(X_0|W_{j}(1)>W_{j}(0) \; \forall j = 0, 1 ) \nonumber\\
&=&\int \mathbb{E}[\kappa u(\cdot)|X_0] \frac{\text{d}P(X_0|W_{j}(1)>W_{j}(0) \; \forall j = 0, 1)}{P\big(W_{j}(1) > W_{j}(0) \; \forall j = 0, 1|X_0 \big)} \nonumber\\
&=&\int \mathbb{E}[\kappa u(\cdot)|X_0] \frac{P(X_0,W_{j}(1)>W_{j}(0) \; \forall j = 0, 1)\text{d}X_0/P\big(W_{j}(1) > W_{j}(0) \; \forall j = 0, 1 \big)}{P\big(X_0,W_{j}(1) > W_{j}(0) \; \forall j = 0, 1\big)/P(X_0)} \nonumber\\
&=&\frac{1}{P\big(W_{j}(1) > W_{j}(0) \; \forall j = 0, 1 \big)} \int \mathbb{E}[\kappa u(\cdot)|X_0] \text{d}P(X_0).
\end{eqnarray}
Plugging Equation (3.4) into Equantion (3.5), we obtain the unconditional effect for a complier with regime switch $W_0$ from 0 to 1 and $W_1$ from 1 to 0
\begin{eqnarray}
&& \mathbb{E} \Big[u_{10}(\cdot) - u_{01}(\cdot)|W_j(1)>W_j(0) \; \forall j = 0, 1\Big]\nonumber\\
%&=&\frac{1}{P\big(W_{j}(1) > W_{j}(0) \; \forall j = 0, 1 \big)} \int g(X_0) \text{dP}(X_0) \\
&=& \frac{\mathbb{E} [g(X_0)]}{P\big(W_{j}(1) > W_{j}(0) \; \forall j = 0, 1 \big)}, \label{idenEQ}  
%&=& \cfrac{\mathbb{E} \Bigg[\kappa u(\cdot) \bigg(\frac{W_0(1-W_1)}{P(Z_0=1,Z_1=0|X_0,X_1)} - \frac{(1-W_0)W_1}{P(Z_0=0,Z_1=1|X_0,X_1)}\bigg)\Bigg]}{P\big(W_{j}(1) > W_{j}(0) \; \forall j = 0, 1 \big)}, \nonumber
\end{eqnarray}
where
\begin{equation*}
    g(X_0) = \mathbb{E} \Big[\kappa u(\cdot)\frac{W_0(1-W_1)}{P(Z_0=1,Z_1=0|X_0,X_1)}|X_0\Big] -\mathbb{E} \Big[\kappa u(\cdot)\frac{(1-W_0)W_1}{P(Z_0=0,Z_1=1|X_0,X_1)}|X_0\Big].
\end{equation*}
Next, we move on to the identification results for the general case. For the more general case, we only present the results for conditional effect; results for unconditional effect can be derived from the conditional version analogously.

\section{Main Identification Results}
\label{mainIden}

We denote $u(\cdot)$ = $u(X_0, W_0, Y_1, X_1, W_1, \cdots, X_T, W_T,Y_{T + 1})$. Let's start with the identification for local treatment effect on a full complier, for whom $W_j(1)>W_j(0)$ for all periods $j$. DeMorgan's Law gives
\begin{eqnarray}
    && P\Big(W_{j}(1) > W_{j}(0) \; \forall j|X_0 \Big) \times \mathbb{E}\Big[u(\cdot)|X_0,W_{j}(1)>W_{j}(0) \; \forall j\Big] \nonumber\\
    &=& \mathbb{E}\Big[u(\cdot)|X_0 \Big] \nonumber \\
    &&-\sum_{i \in \{0,1\}} \mathbb{E} \Big[u(\cdot)|X_0,W_0(1) = W_0(0)=i\Big] \times P\Big(W_0(1) = W_0(0)=i|X_0\Big) \nonumber\\
    &&-\sum_{\tau=0}^{T-1} \sum_{i \in \{0,1\}} \mathbb{E} \Big[u(\cdot)|X_0,W_j(1)>W_j(0), \forall j \leq \tau, W_{\tau+1}(1) = W_{\tau+1}(0)=i \Big] \nonumber\\ 
    && \hskip+2.5cm \times \left(\prod_{j=0}^{\tau}P(W_j(1)>W_j(0)|X_0)\right) \times P\Big(W_{\tau+1}(1) = W_{\tau+1}(0)=i|X_0\Big). \nonumber
\end{eqnarray}
Therefore, in order to identify $\mathbb{E}\Big[u(\cdot)|X_0,W_{j}(1)>W_{j}(0) \; \forall j\Big]$ we need to identify the sums on the right hand side of the above equality and the conditional probabilities $P\Big(W_{j}(1) > W_{j}(0)|X_0 \Big)$.

%%%%%%%%%%%%%%%%%%%%%%%%%
%%%%%%%%%%%%%%%%%%%%%%%%%
\subsection{\label{firststepIden}First Step Identification}
We denote $K_{t,0} = W_t(1-Z_t)$ and $K_{t,1} = (1-W_t)Z_t$. We postpone the identification of conditional probabilities for a moment. We apply DeMorgan's Law to evaluate the expected utility of a full-complier:
\begin{eqnarray}
    &&\mathbb{E}\Big[u(\cdot)|X_0,W_{j}(1)>W_{j}(0) \; \forall j\Big] \nonumber\\
    &=& \frac{\mathbb{E}[\kappa u(\cdot)|X_0]}{P\big(W_{j}(1) > W_{j}(0) \; \forall j|X_0 \big)},
\end{eqnarray}
where
\begin{eqnarray*}
\kappa = 1+\sum_{\tau=1}^{T} (-1)^\tau \sum_{\substack{i_1,\cdots,i_\tau \\ \in \{0,1\}}}\sum_{\substack{j_1<\cdots<j_\tau \\ \in \{0,1,\cdots, T\}}}\frac{\prod \limits_{t=1}^\tau K_{j_t,i_t}}{P(Z_{j_1}=i_1,\cdots,Z_{j_\tau} = i_\tau|X_0,\cdots,X_{j_\tau})}.
\end{eqnarray*}
In one treatment case, we cannot learn anything about non-compliers; in contrast, we can learn about local average treatment effects for people of different compliance types. Indeed, program researchers may be more interested in treatment effects on different compliance types other than the full-compliance. To this end, denote the periods of compliance by
%%%%%%%%%%%%%%%%%%%%%%%%%%%%%%%
\begin{equation*} 
    \mathcal{T}_c:=\{j\in\{0,\cdots, T\}|W_{j}(1)>W_{j}(0)\},
\end{equation*}
and the periods of non-compliance by 
\begin{equation*}
    \mathcal{T}^0_n:=\{j\in\{0,\cdots, T\}|W_{j}(1)=W_{j}(0)=0\}, \text{ and }
\end{equation*} 
\begin{equation*}
    \mathcal{T}^1_n:=\{j\in\{0,\cdots, T\}|W_{j}(1)=W_{j}(0)=1\}.
\end{equation*}
Note that under the monotonicity assumption \ref{monotonicity}, we can consider only the compliance type. 
%%%%%%%%%%%%%%%%%%%%%%%%%%%%%%%
Then a compliance type can be represented by a tuple $(\mathcal{T}_c, \mathcal{T}^0_n, \mathcal{T}^1_n)$. For instance, the full-compliance type is represented by the tuple $(\{0,\cdots,T\},\phi,\phi)$. We define a factor $\kappa$ associated with compliance type $(\mathcal{T}_c, \mathcal{T}^0_n, \mathcal{T}^1_n)$ by
\begin{eqnarray}
\kappa &=& \frac{\prod_{t\in \mathcal{T}^0_n}K_{t,0}\prod_{t\in \mathcal{T}^1_n}K_{t,1}}{P\big(Z_j = i \forall j \in \mathcal{T}^i_n,i=0,1|X_0\big)}\times\nonumber\\
    &&\left(1+\sum_{\tau=1}^{|\mathcal{T}_c|} (-1)^\tau \sum_{\substack{i_1,\cdots,i_\tau \\ \in \{0,1\}}}\sum_{\substack{j_1<\cdots<j_\tau \\ \in \mathcal{T}_c}}\frac{\prod \limits_{t=1}^\tau K_{j_t,i_t}}{P(Z_{j_1}=i_1,\cdots,Z_{j_\tau} = i_\tau|X_0,\cdots,X_{j_\tau})}\right),\nonumber    
\end{eqnarray}
where $\kappa = \kappa({\mathcal{T}_c, \mathcal{T}^0_n, \mathcal{T}^1_n},X_0)$ is a function of compliance type and period-0 covariates. When there is no ambiguity, we leave out the type and covariate arguments for $\kappa$. As a general result, the expected utility of an agent with compliance type $(\mathcal{T}_c, \mathcal{T}^0_n, \mathcal{T}^1_n)$ is
%%%%%%%%%%%%%%%%%%%%%%%%%%%%%%%
\begin{equation}
\mathbb{E}\left[u(\cdot)|X_0,(\mathcal{T}_c,\mathcal{T}^0_n,\mathcal{T}^1_n)\right] = \frac{\mathbb{E}\left[\kappa~ u(\cdot) | X_0 \right]}{P\big(\mathcal{T}_c,\mathcal{T}^0_n,\mathcal{T}^1_n|X_0 \big)}.
\end{equation}
The identification results will be complete once we can identify
%%%%%%%%%%%%%%%%%%%%%%%%%%%%%%
\begin{eqnarray}
&& P\big(\mathcal{T}_c,\mathcal{T}^0_n,\mathcal{T}^1_n|X_0 \big) \nonumber \\ 
    &\equiv& P \Big(W_j(1) > W_j(0) \; \forall j\in \mathcal{T}_c \; \& \; W_j(1) = W_j(0) = 0 \; \forall j \in \mathcal{T}_n^0 \; \& \; W_j(1) = W_j(0) = 1 \; \forall j \in \mathcal{T}_n^1 | X_0\Big), \nonumber
\end{eqnarray} 
which is the focus of the next section.

%%%%%%%%%%%%%%%%%%%%%%%%%%%%%%%
%%%%%%%%%%%%%%%%%%%%%%%%%%%%%%%
\subsection{\label{fullprobEst}Main Probability Estimation}
Let us first consider the full-compliance case, that is, $\mathcal{T}_c = \{0, ..., T\}$, keeping in mind that the general case is similar. In other words, we consider 
\begin{equation*}
    P\big(\mathcal{T}_c,\mathcal{T}^0_n,\mathcal{T}^1_n|X_0 \big) = P(W_j(1) > W_j(0) \; \forall j | X_0).
\end{equation*}
%%%%%%%%%%%%%%%%%%%%%%%%%%%%%%%
Now under Assumption \ref{indepacrosstime}, we have (see the iterative derivation in Appendix)
\begin{equation}
 P(W_j(1) > W_j(0) \; \forall j | X_0) = \prod_{j = 0}^T P\big(W_j(1) > W_j(0) | X_0\big).
\end{equation}
%%%%%%%%%%%%%%%%%%%%%%%%%%%%%%%
Similarly, for any set $(i_0, ..., i_T) \in \{0, 1\}^{T + 1}$ we have
$$ \mathbb{E}\left[\left. \prod_{j = 0}^T W_j(i_j) \right| X_0\right] = P\big(W_j(i_j) = 1 \; \forall \, j | X_0\big) = \prod_{j = 0}^T P\big(W_j(i_j) = 1 | X_0\big) = \prod_{j = 0}^T \mathbb{E}\big[W_j(i_j) | X_0\big].$$
%%%%%%%%%%%%%%%%%%%%%%%%%%%%%%%
On the other hand by Assumption \ref{monotonicity},
\begin{eqnarray}
P(W_j(1) > W_j(0) | X_0)
&=& 1 - P\big(W_j(0) = 1 | X_0) - P(W_j(1) = 0 | X_0\big) \nonumber \\
&=& P\big(W_j(1) = 1 | X_0) - P(W_j(0) = 1 | X_0\big) \nonumber \\
&=& \mathbb{E}\big[W_j(1) | X_0\big] - \mathbb{E}\big[W_j(0) | X_0\big]. \nonumber
\end{eqnarray}
%%%%%%%%%%%%%%%%%%%%%%%%%%%%%%%
Therefore, we can rewrite $P\big(W_j(1) > W_j(0\big) \; \forall \, j | X_0)$ as
\begin{eqnarray}
\prod_{j = 0}^T \bigg( \mathbb{E}\big[W_j(1) | X_0\big] - \mathbb{E}\big[W_j(0) | X_0\big] \bigg)
= \sum_{(i_0, ..., i_T) \in \{0, 1\}^{T + 1}} (-1)^{T + 1 - \sum_{j = 0}^T i_j} \cdot \mathbb{E}\left[\left. \prod_{j = 0}^T W_j(i_j) \right| X_0\right]. \nonumber
\end{eqnarray} 
Thus, we can determine the probability $P(W_j(1) = 1, W_j(0) = 0 \; \forall \, j | X_0)$ if each term $\mathbb{E}\left[\left. \prod_{j = 0}^T W_j(i_j) \right| X_0\right]$ is determined. This can be done according to the following lemma.  
\begin{lemma}
\label{determineProb}
Fix $(i_0, ..., i_T) \in \{0, 1\}^{T + 1}$. Then under Assumptions \ref{IVC}, \ref{IVSR}, \ref{IVPO}, we have
$$ \mathbb{E} \left[\left. \prod_{j = 0}^T W_j \cdot \prod_{j = 0}^T \frac{\textbf{1}_{\{Z_j = i_j\}}}{P(Z_j = i_j | \overline{O}_j, \overline{Z}_{j - 1})} \right| X_0\right] = \mathbb{E}\left[\left. \prod_{j = 0}^T W_j(i_j) \right| X_0\right]. $$
\end{lemma}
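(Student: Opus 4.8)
The plan is to reduce the claimed identity to a pure inverse-probability-weighting statement about a single, instrument-free potential-outcome quantity, and then to verify that statement by peeling off the weighting factors one period at a time through iterated expectations. First I would invoke IVR consistency. On the event $\{Z_j = i_j \text{ for all } j\}$, part (i) of Assumption \ref{IVC} gives $W_j = W_j(Z_j) = W_j(i_j)$ for every $j$; since the indicator product $\prod_{j=0}^T \mathbf{1}_{\{Z_j = i_j\}}$ vanishes off this event, I may replace the observed treatments by the potential treatments throughout. Writing $p_j := P(Z_j = i_j \mid \overline{O}_j, \overline{Z}_{j-1})$, this yields
$$\prod_{j=0}^T W_j \cdot \prod_{j=0}^T \frac{\mathbf{1}_{\{Z_j = i_j\}}}{p_j} = V \cdot \prod_{j=0}^T \frac{\mathbf{1}_{\{Z_j = i_j\}}}{p_j}, \qquad V := \prod_{j=0}^T W_j(i_j),$$
where $V$ is a function of the potential-outcome vector, hence $\mathcal{O}$-measurable, and does not depend on the realized instruments. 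The goal thus becomes showing $\mathbb{E}\!\left[V \prod_{j} \mathbf{1}_{\{Z_j = i_j\}}/p_j \mid X_0\right] = \mathbb{E}[V \mid X_0]$.

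The core of the argument is an iterated-expectation peel running from $j = T$ down to $j = 0$. Let $\mathcal{G}_j := \sigma(\mathcal{O}, \overline{Z}_{j-1})$. The structural fact I would establish is that the observed history $\overline{O}_j$ is a deterministic function of $(\mathcal{O}, \overline{Z}_{j-1})$: by consistency (Assumption \ref{IVC}) together with exclusion (Assumption \ref{IVSR}(ii)), each realized $O_k$ with $k \le j$ equals a potential outcome evaluated along the realized path, and each realized $W_k$ with $k \le j-1$ equals $W_k(Z_k)$, so all of these are $\mathcal{G}_j$-measurable. Consequently $\sigma(\overline{O}_j, \overline{Z}_{j-1}) \subseteq \mathcal{G}_j$, and the conditional independence $Z_j \perp \mathcal{O} \mid (\overline{O}_j, \overline{Z}_{j-1})$ of Assumption \ref{IVSR}(i) upgrades to $P(Z_j = i_j \mid \mathcal{G}_j) = p_j$. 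Since $p_j$ is $\mathcal{G}_j$-measurable and strictly positive by Assumption \ref{IVPO}, conditioning the remaining product $V \prod_{k=0}^{j} \mathbf{1}_{\{Z_k = i_k\}}/p_k$ on $\mathcal{G}_j$ and pulling the $\mathcal{G}_j$-measurable factors $V$ and $\{k < j\}$ outside gives
$$\mathbb{E}\!\left[\frac{\mathbf{1}_{\{Z_j = i_j\}}}{p_j} \,\middle|\, \mathcal{G}_j\right] = \frac{P(Z_j = i_j \mid \mathcal{G}_j)}{p_j} = 1.$$
Thus the $j$-th weighting factor drops out exactly; iterating down to $j = 0$ (where $\mathcal{G}_0 = \sigma(\mathcal{O})$ and $p_0 = P(Z_0 = i_0 \mid X_0)$) removes all $T+1$ factors and leaves precisely $\mathbb{E}[V \mid X_0]$, the right-hand side.

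The main obstacle is the measure-theoretic bookkeeping in the peeling step, namely justifying the upgrade of Assumption \ref{IVSR}(i) from its stated conditioning set $(\overline{O}_j, \overline{Z}_{j-1})$ to the larger $\sigma$-field $\mathcal{G}_j = \sigma(\mathcal{O}, \overline{Z}_{j-1})$. This hinges on verifying carefully that the observed quantities $\overline{O}_j$ carry no information beyond $(\mathcal{O}, \overline{Z}_{j-1})$, which is exactly where consistency and exclusion do the real work; the remaining manipulations are routine applications of the tower property and positivity. I would also remark that neither monotonicity (Assumption \ref{monotonicity}) nor cross-period independence (Assumption \ref{indepacrosstime}) enters the argument, consistent with the hypotheses of the lemma, because the peel never attempts to factorize $\mathbb{E}[V \mid X_0]$ and treats $V$ as an opaque $\mathcal{O}$-measurable random variable throughout.
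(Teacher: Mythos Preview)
Your argument is correct. The replacement of $\prod_j W_j$ by $V=\prod_j W_j(i_j)$ on the event $\{Z_j=i_j\ \forall j\}$ via Assumption~\ref{IVC}(i), the observation that $\overline{O}_j$ is $\sigma(\mathcal{O},\overline{Z}_{j-1})$-measurable (so that Assumption~\ref{IVSR}(i) upgrades to $P(Z_j=i_j\mid \mathcal{G}_j)=p_j$), and the backward peel using the tower property and Assumption~\ref{IVPO} all go through as you describe.

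However, your route is different from the paper's. The paper does not prove the lemma at all in a self-contained way: it simply observes that the identity is a special case of Theorem~3.1 in \cite{P2016} and stops. That external theorem is a general IPW identification result for treatment regimes under sequential randomization; here the instruments $Z_j$ play the role of the sequentially randomized ``treatments'' and $\prod_j W_j$ plays the role of the outcome functional. What you have written is essentially a direct proof of that cited result specialized to this setting, making explicit the measure-theoretic step (the $\mathcal{G}_j$ upgrade) that the citation hides. Your version is more transparent about exactly where consistency, exclusion, independence, and positivity are each used, and it confirms your closing remark that monotonicity and Assumption~\ref{indepacrosstime} are irrelevant here; the paper's version is shorter but relies on the reader having access to the external reference.
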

\begin{proof}
Lemma \ref{determineProb} is indeed a special case of Theorem $3.1$ in \cite{P2016}. 
\end{proof}

Lemma \ref{determineProb} and the above reasoning implies Theorem \ref{identificationProb} below, which is the identification result for $P(W_j(1) > W_j(0) \; \forall \, j | X_0)$.
\begin{theorem}
\label{identificationProb}
Under Assumptions \ref{IVC}, \ref{IVSR}, \ref{IVPO}, and \ref{indepacrosstime}, we have
$$P(W_j(1) > W_j(0) \; \forall \, j | X_0) = \mathbb{E} \left[\left. \prod_{j = 0}^T W_j \cdot \sum_{(i_0, ..., i_T) \in \{0, 1\}^{T + 1}} \prod_{j = 0}^T \frac{ (-1)^{1 - i_j} \cdot \textbf{1}_{\{Z_j = i_j\}}}{P(Z_j = i_j | \overline{O}_j, \overline{Z}_{j - 1})} \right| X_0\right].$$ 
\end{theorem}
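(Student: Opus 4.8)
The plan is to assemble the theorem directly from two ingredients that the text has already prepared: the polynomial expansion of the full-compliance probability in terms of the potential-outcome moments $\mathbb{E}\left[\prod_{j=0}^T W_j(i_j) \mid X_0\right]$, and Lemma~\ref{determineProb}, which rewrites each such moment as the conditional expectation of an observable inverse-propensity-weighted quantity. No new probabilistic argument is required; the proof is a purely algebraic assembly, the only substantive points being a factorization of the alternating sign and the interchange of a finite sum with the conditional expectation.

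First I would recall, from the display immediately preceding the statement, that combining the monotonicity Assumption~\ref{monotonicity} with the across-time independence Assumption~\ref{indepacrosstime} yields
\begin{equation*}
P\big(W_j(1) > W_j(0) \; \forall \, j \mid X_0\big) = \sum_{(i_0, \dots, i_T) \in \{0, 1\}^{T + 1}} (-1)^{T + 1 - \sum_{j = 0}^T i_j} \, \mathbb{E}\left[\left. \prod_{j = 0}^T W_j(i_j) \right| X_0\right].
\end{equation*}
Next I would invoke Lemma~\ref{determineProb} term by term, replacing each factor $\mathbb{E}\left[\prod_{j=0}^T W_j(i_j) \mid X_0\right]$ by $\mathbb{E}\left[\prod_{j=0}^T W_j \cdot \prod_{j=0}^T \textbf{1}_{\{Z_j = i_j\}} / P(Z_j = i_j \mid \overline{O}_j, \overline{Z}_{j-1}) \mid X_0\right]$; this substitution is valid under Assumptions~\ref{IVC}, \ref{IVSR}, and \ref{IVPO}.

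The one step worth isolating is the sign factorization. Since $\sum_{j=0}^T (1 - i_j) = (T+1) - \sum_{j=0}^T i_j$, the alternating sign splits across periods as $(-1)^{T+1-\sum_{j=0}^T i_j} = \prod_{j=0}^T (-1)^{1 - i_j}$, so I can absorb each sign into the corresponding per-period factor. Because the factor $\prod_{j=0}^T W_j$ does not depend on the multi-index $(i_0, \dots, i_T)$ and the sum runs over the finite set $\{0,1\}^{T+1}$, I would then pull the summation inside the conditional expectation by linearity, which produces precisely the claimed formula. I do not anticipate a real obstacle: the interchange is over a finite index set (hence a finite linear combination, not a limiting operation), and each inverse-propensity weight is finite almost surely by the positivity Assumption~\ref{IVPO}, so the entire argument reduces to the mechanical substitution and sign-splitting just described.
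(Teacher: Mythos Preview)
Your proposal is correct and follows exactly the approach the paper takes: the paper's entire proof is the sentence ``Lemma~\ref{determineProb} and the above reasoning implies Theorem~\ref{identificationProb},'' and your plan simply makes the implicit algebra (sign factorization via $\sum_j(1-i_j)=T+1-\sum_j i_j$, then linearity over the finite index set) explicit. The only side remark is that your derivation correctly invokes monotonicity (Assumption~\ref{monotonicity}) for the preliminary expansion, even though the theorem's hypothesis list omits it; that is an oversight in the paper's statement, not in your argument.
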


In a similar manner, we can fully identify $P(\mathcal{T}_c, \mathcal{T}_n^0, \mathcal{T}_n^1 | X_0)$ in the general case. See proof in the appendix.
\begin{theorem}
Under Assumptions \ref{IVC}, \ref{IVSR}, \ref{IVPO}, and \ref{indepacrosstime}, we have
\begin{eqnarray*}
P(\mathcal{T}_c, \mathcal{T}_n^0, \mathcal{T}_n^1 | X_0) = \mathbb{E} \left[\prod_{j \in \mathcal{T}_c\cup\mathcal{T}_n^1} W_j \cdot \prod_{j \in \mathcal{T}_n^0} (1-W_j) \cdot \Bigg(\sum_{(i_j)_{j\in \mathcal{T}_c} \in \{0, 1\}^{|\mathcal{T}_c|}} \prod_{j \in \mathcal{T}_c} \frac{ (-1)^{1 - i_j} \cdot \textbf{1}_{\{Z_j = i_j\}}}{P(Z_j = i_j | \overline{O}_j, \overline{Z}_{j - 1})} ~\Bigg) \right.\\
\left.\left.\cdot \prod_{j \in \mathcal{T}_n^0} \frac{\textbf{1}_{\{Z_j = 1\}}}{P(Z_j = 1| \overline{O}_j, \overline{Z}_{j - 1})}  \prod_{j \in \mathcal{T}_n^1} \frac{\textbf{1}_{\{Z_j = 0\}}}{P(Z_j = 0| \overline{O}_j, \overline{Z}_{j - 1})} \right| X_0\right].
\end{eqnarray*}
\end{theorem}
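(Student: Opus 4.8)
The plan is to mirror the proof of Theorem~\ref{identificationProb} for the full-complier case, but now decomposing the joint event over the three period-types $\mathcal{T}_c$, $\mathcal{T}_n^0$, and $\mathcal{T}_n^1$. First I would invoke Assumption~\ref{indepacrosstime} to factor the joint probability across periods,
\begin{equation*}
P(\mathcal{T}_c, \mathcal{T}_n^0, \mathcal{T}_n^1 | X_0) = \prod_{j \in \mathcal{T}_c} P\big(W_j(1) > W_j(0) | X_0\big) \prod_{j \in \mathcal{T}_n^0} P\big(W_j(1) = W_j(0) = 0 | X_0\big) \prod_{j \in \mathcal{T}_n^1} P\big(W_j(1) = W_j(0) = 1 | X_0\big),
\end{equation*}
exactly as in the iterative derivation behind the full-complier factorization in Section~\ref{fullprobEst}. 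Next I would use Assumption~\ref{monotonicity} to rewrite each per-period factor as a single expectation of a potential treatment: for a complier period, $P(W_j(1) > W_j(0)|X_0) = \mathbb{E}[W_j(1)|X_0] - \mathbb{E}[W_j(0)|X_0]$; for a never-taker period, $W_j(1) = 0$ already forces $W_j(0) = 0$, so $P(W_j(1) = W_j(0) = 0|X_0) = 1 - \mathbb{E}[W_j(1)|X_0]$; and for an always-taker period, $W_j(0) = 1$ forces $W_j(1) = 1$, so $P(W_j(1) = W_j(0) = 1|X_0) = \mathbb{E}[W_j(0)|X_0]$.

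The second step is to identify each potential-treatment expectation by inverse-propensity weighting on the instrument, using the single-period specialization of Lemma~\ref{determineProb} (equivalently, Theorem~$3.1$ of \citet{P2016}). This yields $\mathbb{E}[W_j(i_j)|X_0] = \mathbb{E}\big[W_j \textbf{1}_{\{Z_j = i_j\}}/P(Z_j = i_j | \overline{O}_j, \overline{Z}_{j-1}) \,\big|\, X_0\big]$, and, applied to $1 - W_j$ at $i_j = 1$, gives $1 - \mathbb{E}[W_j(1)|X_0] = \mathbb{E}\big[(1 - W_j)\textbf{1}_{\{Z_j = 1\}}/P(Z_j = 1 | \overline{O}_j, \overline{Z}_{j-1})\,\big|\,X_0\big]$. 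Substituting these in, the complier factor becomes the signed sum $\sum_{i_j \in \{0,1\}} \mathbb{E}\big[W_j (-1)^{1 - i_j}\textbf{1}_{\{Z_j = i_j\}}/P(Z_j = i_j|\overline{O}_j,\overline{Z}_{j-1})\,\big|\,X_0\big]$, the never-taker factor becomes $\mathbb{E}\big[(1 - W_j)\textbf{1}_{\{Z_j = 1\}}/P(Z_j = 1|\overline{O}_j,\overline{Z}_{j-1})\,\big|\,X_0\big]$, and the always-taker factor becomes $\mathbb{E}\big[W_j \textbf{1}_{\{Z_j = 0\}}/P(Z_j = 0|\overline{O}_j,\overline{Z}_{j-1})\,\big|\,X_0\big]$ — matching precisely the three groups of factors in the claimed formula.

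The final step is to collapse the product of these per-period conditional expectations back into the single joint expectation displayed in the theorem. Here I would run Assumption~\ref{indepacrosstime} in reverse together with the joint form of Lemma~\ref{determineProb}: independence across time gives $\prod_j \mathbb{E}[\,\cdot_j\,|\,X_0] = \mathbb{E}[\prod_j (\text{potential outcome})_j \,|\, X_0]$, and the joint inverse-propensity identity then rewrites the product of potential outcomes as the product of observed weighted terms inside one expectation. Expanding the product over complier periods distributes the per-period $\sum_{i_j}$'s into the single outer sum $\sum_{(i_j)_{j \in \mathcal{T}_c}}$, while the fixed-instrument weights for $\mathcal{T}_n^0$ and $\mathcal{T}_n^1$ attach unchanged, reproducing the stated expression. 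I expect the main obstacle to lie in justifying this last interchange: one must verify that the \emph{mixed-type} joint weighting identity — carrying $W_j$ factors on $\mathcal{T}_c \cup \mathcal{T}_n^1$, $(1 - W_j)$ factors on $\mathcal{T}_n^0$, and instrument values either summed (on $\mathcal{T}_c$) or fixed (on $\mathcal{T}_n^0, \mathcal{T}_n^1$) — is a genuine instance of the sequential weighting result of \citet{P2016}, so that the product of single-period identities truly equals the joint one rather than merely multiplying them heuristically. The monotonicity rewriting and the factorization steps are routine once Theorem~\ref{identificationProb} is in hand; the only delicate point is this final bookkeeping of the joint IPW identity under the three coexisting period-types.
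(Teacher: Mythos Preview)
Your proposal is correct and follows essentially the same route as the paper's own proof: factor $P(\mathcal{T}_c,\mathcal{T}_n^0,\mathcal{T}_n^1\mid X_0)$ across periods via Assumption~\ref{indepacrosstime}, rewrite each per-period factor using Assumption~\ref{monotonicity} as $\mathbb{E}[W_j(1)\mid X_0]-\mathbb{E}[W_j(0)\mid X_0]$, $\mathbb{E}[1-W_j(1)\mid X_0]$, or $\mathbb{E}[W_j(0)\mid X_0]$, and then invoke Lemma~\ref{determineProb} to pass to observed, inverse-propensity-weighted quantities. The only difference is one of presentation: the paper first re-aggregates the per-period factors into three type-level expectations $\mathbb{E}\big[\prod_{j\in\mathcal{T}_c}W_j(i_j)\mid X_0\big]$, $\mathbb{E}\big[\prod_{j\in\mathcal{T}_n^0}(1-W_j(1))\mid X_0\big]$, $\mathbb{E}\big[\prod_{j\in\mathcal{T}_n^1}W_j(0)\mid X_0\big]$ before invoking Lemma~\ref{determineProb}, whereas you stay at the per-period level and then collapse at the end; but the underlying logic, and in particular the need for a joint application of the IPW identity you flag as the ``delicate point,'' is identical in both arguments.
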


\section{\label{simu}Application}
To identify or apply our above results to estimate local treatment effects, one needs to know the compliance type of each observation in the sample. Here are a few scenarios that are ideal for application: if the researcher has good institutional knowledge that helps him classify the compliance type based on each subject's covariates; if the whole sample consists of a single compliance type so that one can easily identify the compliance type based on the realized treatment regimes. The following simulation exercise shows how to apply our results to a full compliance environment.  

%%%%%%%%%%%%%%%%%%%
%%%%%%%%%%%%%%%%%%%
\subsection{Simulation Setup}
For this section, we consider a two-period setting with $T = 1$. The outcome of interest is $u(\cdot) = Y_{T+1} \equiv Y_2$. 

The simulation procedure consists of repeating the following process $500$ times.

We generate $n = 500,000$ observations of the following variables. 
\begin{enumerate}

    \item Simulate $X_0 \in U[-1, 1]^{6}$.
    
    \item Set $X_1 \equiv X_0$. 
    
    \item Simulate $\epsilon_0 \sim U[0, 1]$ and $Z_0 \in \{0, 1\}$ with $\displaystyle P(Z_0 = 1 | X_0) = 1 / (1 + \exp(X_0 \xi))$. 
    
    \item Simulate $W_0 \in \{0, 1\}$ such that
\begin{equation*}
    P(W_0 = 1) = 
    \begin{cases} 
        \epsilon_0 & \mbox{ if } Z_0 = 0 \\
        1 & \mbox{ if } Z_0 = 1
    \end{cases}.
\end{equation*}

    \item Generate 
\begin{equation*}
    Y_1 = X_0 \alpha_1 + \beta_1 W_0 + \epsilon_0.
\end{equation*}
    
    \item Simulate $\epsilon_1 \sim U[0, 1]$ and $Z_1 \in \{0, 1\}$ with $P(Z_1 = 1) = e_1 \in (0, 1)$. 
    
    \item Simulate $W_1 \in \{0, 1\}$ such that
\begin{equation*}
    P(W_1 = 1) = 
    \begin{cases} 
        \epsilon_1 & \mbox{ if } Z_1 = 0 \\
        1 & \mbox{ if } Z_1 = 1
    \end{cases}.
\end{equation*}

    \item Generate 
\begin{equation*}
    Y_2 = X_0 \alpha_2 + \beta_2 W_0 + \delta Y_1 + \gamma W_1 + \epsilon_1.
\end{equation*}
So
\begin{equation*}
    Y_2 = X_0 (\alpha_2 + \delta \alpha_1) + (\beta_2 + \delta \beta_1) W_0 + \gamma W_1 + \delta \epsilon_0 + \epsilon_1
\end{equation*}

\end{enumerate}
Here, the parameters are chosen to be
\begin{itemize}
    \item $\xi = [1, 2, 3, -1, -2, -3]^T$; $e_1 = 0.75$
    \item $\alpha_1 = [1, 1, 1, 1, 1, 2, 2, 2, 2, 2]^T$; $\beta_1 = 2$.
    \item $\alpha_2 = [2, 2, 2, 2, 2, 1, 1, 1, 1, 1]^T$; $\beta_2 = 2$; $\delta = 2$; $\gamma = 1$. 
\end{itemize}

%%%%%%%%%%%%%%%%%%%
%%%%%%%%%%%%%%%%%%%
\subsection{Simulation Result}
We want to estimate the local effect of regime (1, 0) with respect to regime (0, 1). Under our assumptions, the true conditional (local) average treatment regime effect is $\tau = (\beta_2 + \delta \beta_1) - \gamma = 5$.  \\

We compare our proposed method with two other methods:
\begin{itemize}
    \item Naive approach: We take the difference in means of two sets of observations corresponding to regime (1, 0) and regime (0, 1).
    
    \item No IVs: We assume sequential treatments, but do not use IVs. The formula will be similar to Equation (\ref{idenEQ}) except that in the numerator, we use the formula in \citet{P2016} with only $W_0, W_1$ instead of $Z_0, Z_1$. 
\end{itemize}

We compare these methods using the $500$ simulated datasets generated above in terms of four measures:

\begin{itemize}

	\item absolute mean error: $|\mathbb{E}[\widehat{\tau} - \tau]|$.
	
	\item mean absolute error: $\mathbb{E}[\lvert\widehat{\tau} - \tau\rvert]$. 

	\item absolute median error: $|median(\widehat{\tau} - \tau)|$.
	
	\item median absolute error: $median(\lvert\widehat{\tau} - \tau\rvert)$.

\end{itemize}
Here, $\tau$ is the true outcome ($\tau = 5$) while $\widehat{\tau}$ is an estimator of $\tau$. The results are summarized in Table \ref{simRes}.

\begin{table}[ht]
\centering
\begin{tabular}{@{}l | cccc@{}}
  \hline
\multirow{2}{*}{Error Metric} & Absolute & Mean & Absolute & Median \\
             & Mean     & Absolute & Median & Absolute \\
  \hline
\texttt{Naive} & 0.86 & 0.86 & 0.86 & 0.86 \\ 
\texttt{No IV} & 1.24 & 1.24 & 1.24 & 1.24 \\ 
\textbf{\texttt{LATRE}} & \textbf{0.56} & \textbf{0.56} & \textbf{0.54} & \textbf{0.54} \\ 
   \hline
\end{tabular}
\caption{\label{simRes}Simulation Results. (Smaller is better.)} 
\end{table}

As we can see, if we do not use IVR when the problem of endogenous selections of sequential treatments is present then the estimate is terrible. In this simulation study, it is even worse than the naive estimate. The use of IVR (that is, our method LATRE) makes the estimation significantly better, which outperforms the naive estimate.  

\section{Conclusions}
\label{concl}

Following \citet{MVR2001, ORR2010, P2016} among others, we attack the setting with a treatment sequence. Generalizing the LATE model of \citet{IA1994}, we provide a method of using IVR to estimate the local average treatment regime effects. There are many research studies that use IVs to study causal effects, regarding the LATE model (notably \citet{IA1994, AIR1996, A2003}) as well as in the dynamic setting (see \citet{HL2004, O2012, SHBT2012, W2002}). However, all of them require a structural functional form assumption on the potential outcomes. In treatment regime settings, this assumption is unlikely to hold which makes their approaches suffer from the misspecification problem. 

On the one hand, we provide a theoretical framework for estimating local average treatment regime effects, which is robust to model misspecification since it does not require any structural assumption. On the other hand, we demonstrate the proposed method's performance via a simulated dataset in a two-treatment setting; the method would certainly be useful with a longer sequence of treatments as well.

\begin{appendices}

%%%%%%%%%%%%%%%%%%%
%%%%%%%%%%%%%%%%%%%
%%%%%%%%%%%%%%%%%%%
\section{Derivation of Equation (3.3)}
%%%%%%%%%%%%%%%%%%%
%%%%%%%%%%%%%%%%%%%
\subsection{\label{sum1Est}First Sum Identification}

%%%%%%%%%%%%%%%%%%%%%%%%%%%%%%%%%%%%
By Bayes rule we have
\begin{eqnarray}
    && \mathbb{E} \left[W_0 (1-Z_0) u(\cdot)|X_0\right] \nonumber\\
    &=& \mathbb{E} \left[u(\cdot)|X_0,W_0=1,Z_0=0\right]P(W_0=1|X_0,Z_0 = 0)P(Z_0=0|X_0) ;\nonumber\\
    && \mathbb{E} \left[(1-W_0) Z_0 u(\cdot)|X_0\right] \nonumber\\
    &=& \mathbb{E} \left[u(\cdot)|X_0,W_0=0,Z_0=1\right]P(W_0=0|X_0,Z_0 = 1)P(Z_0=1|X_0). \nonumber
\end{eqnarray}
%%%%%%%%%%%%%%%%%%%%%%%%%%%%%%%%%%%%
By independence assumption and the above lemma, we can rewrite the summation (\ref{firstSum}) as 
\begin{eqnarray}
\frac{\mathbb{E} \left[W_0 (1-Z_0)u(\cdot)|X_0\right]}{P(Z_0=0|X_0)}+
\frac{\mathbb{E} \left[(1-W_0) Z_0 u(\cdot)|X_0\right]}{P(Z_0=1|X_0)}.\nonumber
\end{eqnarray}

%%%%%%%%%%%%%%%%%%%
%%%%%%%%%%%%%%%%%%%
\subsection{\label{sum2Est}Second Sum Identification}
Similarly we have 
\begin{eqnarray}
    && \mathbb{E} \left[u(\cdot)|X_0,X_1,W_1(1) = W_1(0)=1 \right] P(W_1(1) = W_1(0)=1|X_0,X_1)\nonumber\\
    &=& \frac{\mathbb{E} \left[W_1 (1-Z_1)u(\cdot)|X_0,X_1\right]}{P(Z_1=0|X_0,X_1)};\nonumber\\
    && \mathbb{E} \left[u(\cdot)|X_0,X_1,W_1(1) = W_1(0)=0 \right] P(W_1(1) = W_1(0)=0|X_0,X_1)\nonumber\\
    &=& \frac{\mathbb{E} \left[(1-W_1) Z_1u(\cdot)|X_0,X_1\right]}{P(Z_1=1|X_0,X_1)}.\nonumber
\end{eqnarray}
%%%%%%%%%%%%%%%%%%%%%%%%%%%%%%%%%%%%
Note that we can express conditional analogs of terms in summation (\ref{secondSum}) as
\begin{eqnarray}
    && \mathbb{E} \left[u(\cdot)|X_0,X_1, W_0(1)>W_0(0),W_1(1) = W_1(0)=i \right] \times P(W_0(1)>W_0(0)|X_0,X_1)\nonumber\\
    &=& \mathbb{E} \left[u(\cdot)|X_0,X_1,W_1(1) = W_1(0)=i\right]\nonumber\\
    && - \sum_j \mathbb{E} \left[u(\cdot)|X_0,X_1,W_0(1)=W_0(0)=j, W_1(1) = W_1(0)=i\right] \nonumber \\
    && \hskip+1cm \times P(W_0(1) = W_0(0)=j|X_0,X_1). \nonumber
\end{eqnarray}

By Bayes rule we have
\begin{eqnarray}
    && \mathbb{E}[W_0(1-Z_0)W_1(1-Z_1)u(\cdot)|X_0,X_1] \nonumber \\
    &=& \mathbb{E}\left[u(\cdot)|X_0,X_1,W_0=W_1=1,Z_0=Z_1=0\right] \nonumber \\
    && \hskip+0.5cm \times P(W_0=W_1=1|X_0,X_1,Z_0 = Z_1=0)P(Z_0=Z_1=0|X_0,X_1). \nonumber
\end{eqnarray}
Then using the independence assumption, we can rewrite the summation (\ref{secondSum}) as
\begin{eqnarray}
    && \mathbb{E}_{X_1|X_0} \bigg[\frac{E\left[W_1 (1-Z_1)u(\cdot)|X_0,X_1\right]}{P(Z_1=0|X_0,X_1)}+
    \frac{\mathbb{E}\left[(1-W_1) Z_1 u(\cdot)|X_0,X_1\right]}{P(Z_1=1|X_0,X_1)} \nonumber\\
    && - \frac{\mathbb{E}[W_0(1-Z_0)W_1(1-Z_1)u(\cdot)|X_0,X_1]}{P(Z_1=0,Z_0=0|X_0,X_1)}- \frac{\mathbb{E}[(1-W_0)Z_0W_1(1-Z_1)u(\cdot)|X_0,X_1]}{P(Z_1=0,Z_0=1|X_0,X_1)}\nonumber\\
    && - \frac{\mathbb{E}[W_0(1-Z_0)(1-W_1)Z_1u(\cdot)|X_0,X_1]}{P(Z_1=1,Z_0=0|X_0,X_1)}- \frac{\mathbb{E}[(1-W_0)Z_0(1-W_1)Z_1u(\cdot)|X_0,X_1]}{P(Z_1=1,Z_0=1|X_0,X_1)}\bigg].\nonumber
\end{eqnarray}

%%%%%%%%%%%%%%%%%%%
%%%%%%%%%%%%%%%%%%%
\subsection{\label{probEst}Probability Identification}
Under Assumption \ref{indepacrosstime}, we have
\begin{eqnarray}
    && P\Big(W_j(1) > W_j(0) \; \forall j = 0, 1 | X_0\Big) \nonumber \\
    &=& P\Big(W_0(1) > W_0(0) | X_0\Big) \times P\Big(W_1(1) > W_1(0) | W_0(1) > W_0(0), X_0\Big) \nonumber \\
    &=& P\Big(W_0(1) > W_0(0) | X_0\Big) \times P\Big(W_1(1) > W_1(0) | X_0\Big).  \nonumber
\end{eqnarray}

Similarly, for any set $(i_0, i_1) \in \{0, 1\}^2$ we have
\begin{eqnarray}
\mathbb{E}\Big[W_0(i_0) \times W_1(i_1) | X_0 \Big]
&=& P\Big(W_j(i_j) = 1 \; \forall \, j = 0, 1 | X_0\Big) \nonumber \\
&=& P\Big(W_0(i_0) = 1 | X_0\Big) \times P\Big(W_1(i_1) = 1 | X_0\Big) \nonumber \\
&=& \mathbb{E}\Big[W_0(i_0) | X_0\Big] \times \mathbb{E}\Big[W_1(i_1) | X_0\Big]. \nonumber  
\end{eqnarray}

On the other hand by Assumption \ref{monotonicity},
\begin{eqnarray}
P(W_j(1) > W_j(0) | X_0)
&=& 1 - P\big(W_j(0) = 1 | X_0) - P(W_j(1) = 0 | X_0\big) \nonumber \\
&=& P\big(W_j(1) = 1 | X_0) - P(W_j(0) = 1 | X_0\big) \nonumber \\
&=& \mathbb{E}\big[W_j(1) | X_0\big] - \mathbb{E}\big[W_j(0) | X_0\big]. \nonumber
\end{eqnarray}

Therefore,
\begin{eqnarray}
&& P\big(W_j(1) > W_j(0\big) \; \forall \, j = 0, 1 | X_0) \nonumber \\
&= & \Big(\mathbb{E}\big[W_0(1) | X_0\big] - \mathbb{E}\big[W_0(0) | X_0\big]\bigg) \times \Big(\mathbb{E}\big[W_1(1) | X_0\Big] - \mathbb{E}\big[W_1(0) | X_0\big]\Big) \nonumber \\
&=& \mathbb{E}\Big[ W_0(1) W_1(1) - W_0(0) W_1(1) - W_0(1) W_1(0) + W_0(0) W_1(0) | X_0\Big]. \nonumber
\end{eqnarray} 

Thus, we can determine the probability $P(W_j(1) = 1, W_j(0) = 0 \; \forall \, j | X_0)$ if each term $\mathbb{E}\left[W_0(i_0) W_1(i_1) | X_0\right]$ is determined.

To this end, a direct application of Theorem $3.1$ in \cite{P2016} implies 

\begin{equation*}
    \mathbb{E} \Big[W_0(1) W_1(1) | X_0\Big] =
    \mathbb{E} \Bigg[ W_0 W_1 \times \frac{Z_0 Z_1}{P(Z_0 = 1 | X_0) P(Z_1 = 1 | X_0, Z_0, X_1, Y_1)} \Bigg| X_0 \Bigg].
\end{equation*}
Similarly, $P\big(W_j(1) > W_j(0\big) \; \forall \, j = 0, 1 | X_0)$ is identified.  
%%%%%%%%%%%%%%%%%%%
%%%%%%%%%%%%%%%%%%%

\section{Derivation of Equation (3.4)}
\begin{eqnarray}
    && \mathbb{E} \Big[u_{10}(\cdot) - u_{01}(\cdot)|X_0,W_j(1)>W_j(0) \; \forall j = 0, 1\Big]\nonumber\\
    &=& \mathbb{E} \Big[u(\cdot)|X_0,W_0=1,W_1=0,W_j(1)>W_j(0) \; \forall j = 0, 1\Big]\nonumber \\
    && \hskip+2.5cm -\mathbb{E}\Big[u(\cdot)|X_0,W_0=0,W_1=1,W_j(1)>W_j(0) \; \forall j = 0, 1\Big] \nonumber\\
    &=& \mathbb{E} \Big[u(\cdot)|X_0,Z_0=1,Z_1=0,W_j(1)>W_j(0) \; \forall j = 0, 1\Big] \nonumber \\
    && \hskip+2.5cm -\mathbb{E}\Big[u(\cdot)|X_0,Z_0=0,Z_1=1,W_j(1)>W_j(0) \; \forall j = 0, 1\Big] \nonumber\\
    &=& \frac{1}{P(W_j(1)>W_j(0) \; \forall j = 0, 1|X_0)} \Bigg[\mathbb{E} \Big[\kappa u(\cdot)\frac{W_0(1-W_1)}{P(Z_0=1,Z_1=0|X_0,X_1)}|X_0\Big] \nonumber \\
    && \hskip+6cm -\mathbb{E} \Big[\kappa u(\cdot)\frac{(1-W_0)W_1}{P(Z_0=0,Z_1=1|X_0,X_1)}|X_0\Big]\Bigg].
\end{eqnarray}

\section{Derivation of Equation (4.1)}
First DeMorgan's Law gives 
\begin{eqnarray}
&& \mathbb{E}\Big[u(\cdot)|X_0,W_j(1)>W_j(0), \forall j \leq \tau, W_{\tau+1}(1) = W_{\tau+1}(0)=i \Big] \nonumber\\ 
&& \hskip+2cm \times \left(\prod_{j=0}^{\tau}P(W_j(1)>W_j(0)|X_0)\right) - \mathbb{E}\Big[u(\cdot)|X_0, W_{\tau+1}(1) = W_{\tau+1}(0)=i\Big] \nonumber \\
&=& \sum_{k=1}^\tau (-1)^k \sum_{\substack{i_1,\cdots,i_k \\ \in \{0,1\}}}\sum_{\substack{j_1<\cdots<j_k \\ \in\{0,1,\cdots,\tau\}}} \mathbb{E}\Big[u(\cdot)|X_0,W_{j_m}(1) = W_{j_m}(0)=i_m \; \forall m \in \{1, ..., k\}, \nonumber \\
&& \hskip+10cm W_{\tau+1}(1) = W_{\tau+1}(0)=i\Big].  \nonumber
\end{eqnarray}
Thus the expected utility of a full-complier is
\begin{eqnarray}
    &&\mathbb{E}\Big[u(\cdot)|X_0,W_{j}(1)>W_{j}(0) \; \forall j\Big] \nonumber\\
    &=& \frac{1}{P\big(W_{j}(1) > W_{j}(0) \; \forall j|X_0 \big)} \times \Bigg(\mathbb{E}\Big[u(\cdot)|X_0\Big] + \nonumber\\
    && \sum_{\tau=1}^T (-1)^\tau \sum_{\substack{i_1,\cdots,i_\tau \\ \in \{0,1\}}}\sum_{\substack{j_1<\cdots<j_\tau \\ \in\{0,1,\cdots,T\}}}E_{X_1,\cdots,X_{j_\tau}|X_0}\left[\frac{\mathbb{E} \Big[u(\cdot)\prod\limits_{t=1}^\tau K_{j_t,i_t}|X_0,\cdots,X_{j_\tau}\Big]}{P(Z_{j_1}=i_1,\cdots,Z_{j_\tau} = i_\tau|X_0,\cdots,X_{j_\tau})}\right]\Bigg)\nonumber\\
    &=& \frac{\mathbb{E}[\kappa u(\cdot)|X_0]}{P\big(W_{j}(1) > W_{j}(0) \; \forall j|X_0 \big)},\nonumber
\end{eqnarray}

\section{Derivation of Equation (4.3)}
By Assumption 2.5, we have
\begin{eqnarray}
    && P(W_j(1) > W_j(0) \forall j | X_0) \nonumber \\
    &=& P(W_0(1) > W_0(0) | X_0) \cdot P(W_j(1) > W_j(0) \forall j \geq 1 | W_0(1) > W_0(0), X_0) \nonumber \\
    &=& P(W_0(1) > W_0(0) | X_0) \cdot P(W_j(1) > W_j(0) \forall j \geq 1 | X_0)  \nonumber \\
    &=& P(W_0(1) > W_0(0) | X_0) \cdot P(W_1(1) > W_1(0) | X_0) \nonumber \\ 
    && \hskip+1.5cm \times \; \mathbb{E}_{\overline{O}_1, Z_0 | X_0} \big[ P(W_j(1) > W_j(0) \forall j \geq 2 | W_1(1) > W_1(0), \overline{O}_1, Z_0) \big]  \nonumber \\
    &=& P(W_0(1) > W_0(0) | X_0) \cdot P(W_1(1) > W_1(0) | X_0) \nonumber \\ 
    && \hskip+1.5cm \times \; \mathbb{E}_{\overline{O}_1, Z_0 | X_0} \big[ P(W_j(1) > W_j(0) \forall j \geq 2 | \overline{O}_1, Z_0) \big]  \nonumber \\
    &=& P(W_0(1) > W_0(0) | X_0) \cdot P(W_1(1) > W_1(0) | X_0) \cdot P(W_j(1) > W_j(0) \forall j \geq 2 | X_0). \nonumber
\end{eqnarray}
%%%%%%%%%%%%%%%%%%%%%%%%%%%%%%%
Continuing this process, we obtain the result.

\section{Proof of Theorem (4.3)}
First note that under Assumption \ref{indepacrosstime}, we have can expand $P\big(\mathcal{T}_c,\mathcal{T}^0_n,\mathcal{T}^1_n|X_0 \big)$ as a product form
\begin{eqnarray*}
 \prod_{j\in \mathcal{T}_c} P \Big(W_j(1) > W_j(0)|X_0\Big) \prod_{j \in \mathcal{T}_n^0} P \Big(W_j(1) = W_j(0) = 0|X_0\Big) \prod_{j \in \mathcal{T}_n^1} P \Big(W_j(1) = W_j(0) = 1|X_0\Big). 
\end{eqnarray*} 
Next by Assumption \ref{monotonicity}, 
%%%%%%%%%%%%%%%%%%%%%%%%
\begin{eqnarray}
P(W_j(1) = W_j(0) = 0 | X_0)
&=& P(W_j(1) = 0 | X_0) = \mathbb{E}\big[1-W_j(1) | X_0\big];\nonumber \\
P(W_j(1) = W_j(0) = 1 | X_0)
&=& P(W_j(0) = 1 | X_0) = \mathbb{E}\big[W_j(0) | X_0\big]. \nonumber
\end{eqnarray}
We can rewrite $P\big(\mathcal{T}_c,\mathcal{T}^0_n,\mathcal{T}^1_n|X_0 \big)$ as
\begin{eqnarray}
&&  \prod_{j\in \mathcal{T}_c} \bigg( \mathbb{E}\big[W_j(1) | X_0\big] - \mathbb{E}\big[W_j(0) | X_0\big] \bigg) \prod_{j \in \mathcal{T}_n^0} \bigg( \mathbb{E}\big[1-W_j(1) | X_0\big]\bigg) \prod_{j \in \mathcal{T}_n^1} \bigg(\mathbb{E}\big[W_j(0) | X_0\big] \bigg) \nonumber \\
&=& \bigg(\sum_{(i_j)_{j\in \mathcal{T}_c} \in \{0, 1\}^{|\mathcal{T}_c|}} (-1)^{|\mathcal{T}_c| - \sum_{j\in \mathcal{T}_c} i_j} \cdot\mathbb{E}\left[\left. \prod_{j\in \mathcal{T}_c} W_j(i_j) \right| X_0\right]\bigg)\nonumber\\
&& \hskip+2.5cm\cdot\mathbb{E}\left[\left. \prod_{j\in \mathcal{T}_n^0} (1-W_j(1)) \right| X_0\right]
\mathbb{E}\left[\left. \prod_{j\in \mathcal{T}_n^1} (W_j(0)) \right| X_0\right]. \nonumber
\end{eqnarray} 
Applying Lemma \ref{determineProb}, we obtain the identification result for general compliance type $P(\mathcal{T}_c, \mathcal{T}_n^0, \mathcal{T}_n^1 | X_0)$.

\end{appendices}

\bibliographystyle{Chicago}
\bibliography{JASA-local_ave_treat_reg}
\end{document}